\newtheorem{theorem}{Theorem}[section]
\newtheorem{definition}[theorem]{Definition}
\newtheorem{example}[theorem]{Example}
\begin{document}
\title{\textbf{CONNECTION STABILITY OF VECTOR FIELDS WITH ASTROPHYSICAL APPLICATION TO THE EINSTEIN-VLASOV SYSTEM}}
\author[Mohammadreza Molaei and Christian Corda]{\textbf{$^{1}$Mohammadreza Molaei and $^{2}$Christian Corda}\\
$^{1}$Department of Pure Mathematics, Shahid Bahonar University of
Kerman, Kerman,, Iran. E-mail: mrmolaei@uk.ac.ir\\ $^{2}$SUNY
Polytechnic Institute, 13502 Utica, New York, USA, and International Institute for
Applicable Mathematics and Information Sciences, B. M. Birla
Science Centre, Adarshnagar, Hyderabad 500063, India.\\ E-mail:
cordac.galilei@gmail.com} \maketitle

\begin{abstract}
In this paper we present a method for considering the stability of
smooth vector fields on a smooth manifold which may not be
compact. We show that these kind of stability which is called
"connection stability" is equivalent to the structural stability
in the case of compact manifolds. We prove if $X$ is a connection
stable vector field, then any multiplication of it by a nonzero
scalar is also a connection stable vector field. We present  an
example of a connection stable vector field on a noncompact
manifold, and we also show that harmonic oscillator is not a
connection stable vector field. We present a technique to prove a
class of vector fields are not connection stable. As a concrete
physical example, we will apply the analysis to the
Einstein-Vlasov system in an astrophysical context in which we 
propose an approach that could, in principle and partially, help 
to understand the problem of galaxy rotation curves.
\end{abstract}
\maketitle \noindent {\bf Keywords:} { Connection stability;
Structural stability; Einstein-Vlasov system; Vector field, Noncompact manifold }\\
\noindent \textbf{Statements and Declarations}:  The authors
declare no competing interests.\\
{\bf AMS Classification:} {34D30, 37C20}\\
\section{Introduction}
Stability of vector fields on compact manifolds has been
considered by many researchers from different viewpoints, and many
papers have been published on this topic \cite{A1, B, B1, C, S}.
In fact the compact condition of the ambient manifold let us to
define a metric on its vector fields. More precisely when $M$ is a
$C^{r+1}$ ($r\geq 1$) compact manifold then we can find an open
cover $\{W_{1},\cdots, W_{s}\}$ for it so that each $W_{i}$ is a
subset of $V_{i}$ such that $(V_{i}, \varphi _{i})$ is a local
chart which $\varphi _{i}(V_{i})$ is a ball of radius $2$ and
$\varphi _{i}(W_{i})$ is a ball of radius $1$ ($B(1)$) around
origin.  A $C^{r}$ vector field $X$ in the tangent
bundle of $M$ can be denoted by $X=X^{\gamma}\partial_{\gamma}$
($\partial_{\gamma}$ also denoted by $\frac{\partial}{\partial
x_{\gamma}}$), where each $X^{\gamma} $ as a component of $X$ is a
function from $M$ to $R$. For each   $X^{\gamma} $ a $C^{r}$ norm
is defined by
$$||X^{\gamma}||_{r}=$$ $$\displaystyle \max_{i} \sup \{ ||X^{\gamma i}(u)||,
||dX^{\gamma i}(u)||,\cdots, ||d^{r}X^{\gamma i}(u)|| ~:~
X^{\gamma i}=X^{\gamma}o\varphi_{i} ^{-1}, ~u\in B(1)\}.$$ Two
vector fields on $M$ are called $C^{r}$ close if their components
are $C^{r}$ close with the former norm \cite{PD}. By this norm we
say that a vector field $X$ on $M$ is structurally stable if there
is a neighborhood of it such that each $Y$ belonging to this
neighborhood is topologically conjugate to it. This definition has
no meaning when $M$ is not compact.
\\In this work we
present a method to consider the stability of vector fields on
noncompact manifolds.
Of course the method is also applicable for compact manifolds.\\
In
2020 we present a method for answering to the following question \cite{MG}:\\
 Suppose we have
 a system of ordinary differential equations on a smooth manifold
$M$, can we find a linear connection $\nabla $ on $M$ such that
each orbit of this system be a geodesic in $(M, \nabla)$?\\ For
example if we have a flow in the nature  can we find a connection
to create such flow as its geodesics (see Figure \ref{fig1})?
Here flow means: the evolution operators created by
the integral curves of a vector field. The positive answer to
this
question have been done by the following method \cite{MG}:\\
If $\alpha $ is an integral curve of a vector field  $X$ passing
through a point $p\in M$, then we have
 \begin{equation}\label{eq3}
          \left\{
         \begin{array}{rl}
         \dot{\alpha}(t)=X(\alpha(t)) \\
         \alpha(0)=p \hspace{1cm}
         \end{array} \right..
         \end{equation}
 We find the connection components $\Gamma^{j}_{ik}$  by solving the
 system
\begin{equation}\label{eq4}\frac{d X^{j}}{dt}(t)+{X}^{k}(t){X}^{i}(t)\Gamma^{j}_{ik}(\alpha
(t))=0,
\end{equation}
by removing the subset of the phase space which a component of the
vector field $X$ is zero on it. For the proof we see
that if $ \nabla$ is a connection with the  Christoffel symbols
$\Gamma^{j}_{ik}$ then $\beta $ is a geodesic of it if $\beta $ is
a solution of the geodesic equation \cite{MG}
$\frac{d^{2}}{dt^{2}}\beta^{j}(t)+\dot{\beta}^{k}(t)\dot{\beta}^{i}(t)\Gamma^{j}_{ik}(t)=0$.
For finding connection,  in the former equation we replace $\beta$
with the integral curve $\alpha$ of a given vector field $X$, and
we have
\begin{equation}\label{eq10}\frac{d\dot{\alpha}^{j}}{dt}(t)+\dot{\alpha}^{k}(t)\dot{\alpha}^{i}(t)\Gamma^{j}_{ik}(\alpha
(t))=0. \end{equation} By equation \ref{eq3}   we  replace
$\dot{\alpha}^{k}(t)$ with ${X}^{k}(t)$, and we deduce the
following equation:
$$\frac{d X^{j}}{dt}(t)+{X}^{k}(t){X}^{i}(t)\Gamma^{j}_{ik}(\alpha
(t))=0.$$ In these equations the unknown variables are
$\Gamma^{j}_{ik}$. In fact if $m$ is the dimension of $M$ then we
have $m$ algebraic equation with $m^{3}$ unknown variables. So we
can find many different connections. For this reason for a given
$(i,k)$ we take $\Gamma^{j}_{ik}=-\frac{\frac{d
X^{j}}{dt}(t)}{{X}^{k}(t){X}^{i}(t)},$ for
$j=1,2,\cdots , m,$ and we take $\Gamma^{j}_{rs}=0 $ for
$(r,s)\neq (i,k)$. This means that for a given $(i,k)$ we deduce a
connection. So  $m^{2}$ connections can be calculated by changing
$(i,k)$. We denote the Christoffel symbols of
these $m^{2}$ resulted connections from the vector field $X$ by
$\Gamma^{jX}_{ik}$. One must pay attention to this
point that $\Gamma^{jX}_{ik}$ are not  Christoffel symbols of one
connection. In fact they are
 Christoffel symbols of $m^{2}$ connections which may not be zero.\\
 In the next section we make use of these linear connections to define
 {\it connection stability}. We show that on compact manifolds
 connection stability is equivalent to structural stability. We
 present an example of a connection stable vector field on a noncompact manifold, and we prove that if $X$ is a connection stable
 vector field, then $cX$ is a connection stable vector field for
 each $c\in \mathbb{R}\setminus \{0\}$. In the third section
 the connection stability of  harmonic oscillator is
 considered. We see that it is not a connection stable vector
 field. By putting a condition on the Christoffel symbols of a
 vector fields we deduce a class of vector fields on $R^{3}$ which are not
 connection stable (see Theorem \ref{Th2}).
 \begin{figure}\label{fig1}
\includegraphics[scale=.65]{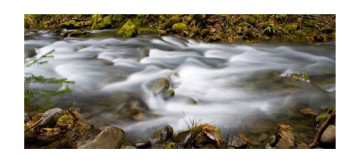}
\caption{A nature flow which has been taken from
$www.conservationgateway.org.$} \label{fig1}
\end{figure}
In section $4$ we consider a vector field deduced from
Einstein-Vlasov system and we show that it is not a connection
stable vector field.
\section{NEW CONCEPT OF STABILITY}
We assume $M$ is a smooth manifold and $V(M)$ is the set of smooth
vector fields on $M$ with finite singularities. If $X\in V(M)$ we
denote the finite set of it's singularities by $S(X),$ and we
denote it's cardinality by  $|S(X)|.$
\begin{definition}\label{def1} We say that a vector field $X\in V(M)$ is a connection stable
vector field  in the set of vector fields on $M$, if there is
$\epsilon
>0$  such that each $Y\in V(M)$ with the following two conditions is topologically conjugate to $X$. The conditions of $Y$ are $|S(Y)|=|S(X)|$, and $|\Gamma^{jY}_{ik}(r)-\Gamma^{jX}_{ik}(r)|<\epsilon $ for all
$r\in M\setminus (S(Y)\cup S(X)).$
\end{definition}
One must pay attention to this point that the former
definition is independent of the selected local chart of  the
atlas of $M$, because for a given $(i,k)$ the resulted connection
$\nabla^{(i,k)}$ with the  Christoffel symbols $$
         \Gamma^{j}_{rs}= \left\{
         \begin{array}{rl}
         \Gamma^{jX}_{ik}~if~(r,s)=(i,k) \\
        0~ ~ ~ ~ ~if~(r,s)\neq (i,k)
         \end{array} \right..
         $$ is independent
of the selected local chart, and it only depends on the atlas of
$M$. More precisely it determines by a partition of unity subordinate to the atlas of $M$.  $\Gamma^{jX}_{ik}$ is calculated via the components of the vector field $X$ which are  not depend on the selected coordinate. \\
We recall that two smooth vector fields $X$ and $Y$ are
topologically conjugate if there is a homeomorphism $h$ on $M$
takes the trajectory of $X$ at a given point $p\in M$ to the
trajectory of $Y$ at $h(p)$ and it preserves the direction of the
trajectories.
\begin{theorem}\label{th1}
Let $X$ be a connection stable vector field, and $c$ be a nonzero
real number. Then the vector field $cX$ is a connection stable
vector field.
\end{theorem}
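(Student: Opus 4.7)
The plan is to reduce the connection stability of $cX$ to that of $X$ by exploiting the fact that the Christoffel coefficients $\Gamma^{jX}_{ik}$ constructed in the introduction are invariant under multiplication of the underlying vector field by a nonzero scalar. Once that invariance is established, the theorem will follow by applying Definition \ref{def1} (for $X$) twice.

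First I would verify the scale invariance by a direct calculation. Along any integral curve $\alpha$ of $X$ one has $\frac{dX^{j}}{dt}(\alpha(t))=X^{l}(\alpha(t))\,\partial_{l}X^{j}(\alpha(t))$, so the recipe from the introduction can be rewritten, for the fixed pair $(i,k)$, as
\[
\Gamma^{jX}_{ik}(r)=-\frac{X^{l}(r)\,\partial_{l}X^{j}(r)}{X^{k}(r)\,X^{i}(r)},
\]
which is homogeneous of degree zero in $X$. Replacing $X$ by $cX$ multiplies both numerator and denominator by $c^{2}$, yielding $\Gamma^{j,cX}_{ik}(r)=\Gamma^{jX}_{ik}(r)$ wherever both sides are defined, and $c\neq 0$ forces $S(cX)=S(X)$. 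Consequently the ``$\epsilon$-neighborhoods'' around $X$ and around $cX$ in the sense of Definition \ref{def1} coincide as subsets of $V(M)$.

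Next I would use this identity in two stages. Let $\epsilon>0$ be the constant witnessing the connection stability of $X$. The vector field $cX$ itself satisfies $|S(cX)|=|S(X)|$ and $|\Gamma^{j,cX}_{ik}-\Gamma^{jX}_{ik}|\equiv 0<\epsilon$, so by Definition \ref{def1} applied to $X$ there is a homeomorphism $h_{0}\colon M\to M$ taking trajectories of $X$ to trajectories of $cX$ with direction preserved. Now suppose $Y\in V(M)$ satisfies $|S(Y)|=|S(cX)|$ and $|\Gamma^{jY}_{ik}(r)-\Gamma^{j,cX}_{ik}(r)|<\epsilon$ for $r\in M\setminus(S(Y)\cup S(cX))$. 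By the scale invariance from the previous step, these conditions are exactly $|S(Y)|=|S(X)|$ and $|\Gamma^{jY}_{ik}-\Gamma^{jX}_{ik}|<\epsilon$, so there is a homeomorphism $h\colon M\to M$ conjugating $X$ to $Y$. Since topological conjugacy is an equivalence relation, $h\circ h_{0}^{-1}$ conjugates $cX$ to $Y$, showing that $cX$ is connection stable with the same $\epsilon$.

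The only nontrivial step is the scale invariance $\Gamma^{j,cX}_{ik}=\Gamma^{jX}_{ik}$; everything else is a formal two-stage application of the hypothesis. The subtle conceptual point worth flagging is the case $c<0$, where the flows of $X$ and $cX$ traverse their common trajectories in opposite senses. This is absorbed uniformly into the homeomorphism $h_{0}$: its existence is guaranteed because connection stability of $X$ already forces $X$ and $-X$ to be topologically conjugate (take $Y=-X$ in Definition \ref{def1}), so no separate argument for the sign of $c$ is needed.
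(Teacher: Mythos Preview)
Your argument is correct and follows the same overall strategy as the paper's: determine how the data $\Gamma^{jX}_{ik}$ transform under $X\mapsto cX$, and then transport the connection stability of $X$ to $cX$.  The difference lies in the scaling computation itself.  By rewriting $\Gamma^{jX}_{ik}(r)=-\,X^{l}\partial_{l}X^{j}/(X^{k}X^{i})$ you observe that this expression is homogeneous of degree zero in $X$, so that $\Gamma^{j,cX}_{ik}=\Gamma^{jX}_{ik}$ and $S(cX)=S(X)$; consequently the $\epsilon$-neighbourhood of $cX$ in Definition~\ref{def1} is literally the same set of vector fields as the $\epsilon$-neighbourhood of $X$, and the stability transfers with the \emph{same} $\epsilon$.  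The paper instead records $\Gamma^{j,cX}_{ik}=\tfrac{1}{c}\Gamma^{jX}_{ik}$ (in effect differentiating $cX^{j}$ along the integral curves of $X$ rather than of $cX$), and then compensates by rescaling the stability constant to $|c|\epsilon$ and comparing $\tfrac{1}{c}Y$ with $X$.  Your homogeneity observation is the sharper statement and yields a shorter proof.  Your explicit construction of $h_{0}$ conjugating $X$ with $cX$, followed by composition, also makes the final transitivity step more transparent than the paper's one-line passage from ``$\tfrac{1}{c}Y$ conjugate to $X$'' to ``$Y$ conjugate to $cX$''; and your remark on the sign of $c$ correctly isolates the only delicate point, which is indeed absorbed by Definition~\ref{def1} rather than requiring a separate argument.
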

\begin{proof}
The Equation \ref{eq4} implies that
$\Gamma^{jcX}_{ik}(\alpha^{cX}(t))=\frac{-\frac{dcX^{j}}{dt}}{cX^{k}(t)cX^{i}(t)}=\frac{1}{c}\Gamma^{jX}_{ik}(\alpha^{X}(t)).
 $ There is  $|c|\epsilon >0$ such that if
 $\frac{1}{c}Y\in V(M)$ with $|S(\frac{1}{c}Y)|=|S(X)|$
 and $|\Gamma^{j\frac{1}{c}Y}_{ik}(r)-\Gamma^{jX}_{ik}(r)|<|c|\epsilon
 $ for all
 $r\in M\setminus (S(\frac{1}{c}Y)\cup S(X))$, then $\frac{1}{c}Y$ is topologically conjugate to $X$.
 For given $Y\in V(M)$ with
 $|\Gamma^{jY}_{ik}(r)-\Gamma^{jcX}_{ik}(r)|<\epsilon $ for $r\in M\setminus (S(Y)\cup
 S(cX))$ we have
 $$|\Gamma^{jY}_{ik}(r)-\Gamma^{jcX}_{ik}(r)|=\frac{1}{|c|}|\Gamma^{j\frac{1}{c}Y}_{ik}(r)-\Gamma^{jX}_{ik}(r)|<\epsilon=\frac{1}{|c|}|c|\epsilon
 $$ for all
 $r\in M\setminus (S(Y)\cup S(cX)).$ Thus
 $|\Gamma^{j\frac{1}{c}Y}_{ik}(r)-\Gamma^{jX}_{ik}(r)|<|c|\epsilon$.
 Hence $\frac{1}{c}Y$ is topologically conjugate to $X$, and this
 implies that  $Y$ is topologically conjugate to $cX$.
\end{proof}

In the next theorem we show that in the case of compact manifold,
the structural stability implies  to connection stability and vice
versa.
\begin{theorem}\label{th2}
 If $M$ is a smooth compact manifold, then a smooth vector field
 $X$ on $M$ is structurally stable if and only if $X$ is
 connection stable.
\end{theorem}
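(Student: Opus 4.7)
The plan is to prove the two implications separately by setting up a dictionary between $C^{r}$-closeness of vector fields (the topology underlying structural stability) and pointwise closeness of the connection components $\Gamma^{jX}_{ik}$ on $M\setminus S(X)$ (the notion underlying connection stability). On a compact $M$, this dictionary should work cleanly on any compact subset of $M\setminus S(X)$ where the products $X^{i}X^{k}$ are bounded below, while the singular set $S(X)$ requires extra care, which I intend to handle by exploiting the hyperbolicity that structural stability forces on the singularities of $X$.

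For the forward direction (structural stability $\Rightarrow$ connection stability), I would begin with a $C^{r}$-neighborhood $U$ of $X$ on which every element is topologically conjugate to $X$, and look for $\epsilon>0$ such that every $Y$ satisfying $|S(Y)|=|S(X)|$ and $|\Gamma^{jY}_{ik}-\Gamma^{jX}_{ik}|<\epsilon$ automatically lies in $U$. The identity $X^{l}\partial_{l}X^{j}=-\Gamma^{jX}_{ik}X^{i}X^{k}$ from equation \ref{eq4} is the bridge: on a compact subset of $M\setminus S(X)$ where $|X^{i}X^{k}|\geq c>0$, pointwise $\epsilon$-closeness of the $\Gamma$'s translates into uniform $C^{0}$-closeness of $Y^{l}\partial_{l}Y^{j}$ to $X^{l}\partial_{l}X^{j}$, and this, together with the compatibility condition $|S(Y)|=|S(X)|$, can be bootstrapped to $C^{r}$-closeness of $Y$ to $X$ on that region.

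For the reverse direction (connection stability $\Rightarrow$ structural stability), I would fix the $\epsilon$ provided by connection stability and produce a $C^{r}$-neighborhood $U$ of $X$ whose elements automatically satisfy both the singularity-count condition and the $\Gamma$-closeness bound. If $Y$ is $C^{r}$-close to $X$, then $Y^{l}\partial_{l}Y^{j}$ is $C^{0}$-close to $X^{l}\partial_{l}X^{j}$ and $Y^{i}Y^{k}$ is close to $X^{i}X^{k}$; division by the bounded-below denominator on compact subsets of $M\setminus S(X)$ then gives closeness of $\Gamma^{jY}_{ik}$ to $\Gamma^{jX}_{ik}$. The implicit function theorem, applied at each hyperbolic singularity of $X$, shrinks $U$ further so that the zeros of $Y$ are in bijection with those of $X$, yielding $|S(Y)|=|S(X)|$.

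The main obstacle I anticipate, in both directions, is uniform control near $S(X)$: there both $X^{i}X^{k}$ and $X^{l}\partial_{l}X^{j}$ vanish, so the connection components can be unbounded and pointwise $\epsilon$-closeness is a delicate constraint. My intended strategy is to linearize near each $s\in S(X)$ using Hartman--Grobman (which applies because structural stability on a compact manifold forces hyperbolic singularities via Peixoto-type arguments), write both $X$ and a nearby $Y$ explicitly in adapted local coordinates, and then verify that the leading-order blow-ups of $\Gamma^{jX}_{ik}$ and $\Gamma^{jY}_{ik}$ match to within the required precision, so that the local and global estimates can be spliced together using a partition of unity subordinate to the cover by neighborhoods of singularities and their complement.
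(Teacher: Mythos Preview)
Your forward implication has a genuine gap. You plan to show that any $Y$ with $|\Gamma^{jY}_{ik}-\Gamma^{jX}_{ik}|<\epsilon$ and $|S(Y)|=|S(X)|$ must land in the $C^{r}$ structural-stability neighborhood $U$ of $X$. But the assignment $X\mapsto(\Gamma^{jX}_{ik})$ is far from injective: since $\Gamma^{jX}_{ik}=-(X^{l}\partial_{l}X^{j})/(X^{i}X^{k})$, the rescaled field $cX$ satisfies $\Gamma^{jcX}_{ik}=\Gamma^{jX}_{ik}$ for every $c\neq 0$. Hence $Y=2X$ meets the $\Gamma$-hypothesis with $\epsilon=0$ while $\|Y-X\|_{C^{r}}=\|X\|_{C^{r}}$, so $Y$ need not lie in $U$. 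More generally, your ``bootstrap'' from closeness of the ratio $\Gamma^{jY}_{ik}=-(Y^{l}\partial_{l}Y^{j})/(Y^{i}Y^{k})$ to closeness of the numerator $Y^{l}\partial_{l}Y^{j}$ presupposes a priori control of the denominator $Y^{i}Y^{k}$, which $\Gamma$-closeness alone never supplies. The conjugacy you want is real ($2X$ \emph{is} conjugate to $X$), but it cannot be manufactured by forcing $Y$ into $U$.

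The paper takes a different route: both directions are run by contradiction via sequences. For the forward direction it chooses $Y_{n}$ with $\Gamma^{jY_{n}}_{ik}\to\Gamma^{jX}_{ik}$ and not conjugate to $X$, regards each $Y_{n}$ as a solution of $\frac{d}{dt}Y_{n}^{j}+\Gamma^{jY_{n}}_{ik}Y_{n}^{i}Y_{n}^{k}=0$, and appeals to continuous dependence and ODE uniqueness to conclude $Y_{n}\to X$, contradicting structural stability. For the reverse direction it takes $Y_{n}\to X$ in $C^{r}$ and argues $\Gamma^{jY_{n}}_{ik}\to\Gamma^{jX}_{ik}$ uniformly by compactness. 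Neither direction in the paper invokes hyperbolicity of $S(X)$ or Hartman--Grobman. Your attention to the blow-up of the $\Gamma$'s near $S(X)$ is more honest about a genuine difficulty the paper glosses over, but note a second gap on your side: in the reverse direction you assume the zeros of $X$ are hyperbolic in order to apply the implicit function theorem and pin down $|S(Y)|$, yet at that stage you only know $X$ is connection stable, so the Peixoto-type input you cite is not yet available.
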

\begin{proof}
Suppose $X$ is a structurally stable vector field on $M$ which is
not connection stable. Then for each natural number $n$ there is a
smooth vector field $Y_{n}$ with $|S(Y_{n})|=|S(X)|$, and for each
$r\in M\setminus (S(Y_{n})\cup S(X)) $,
$|\Gamma^{jY_{n}}_{ik}(r)-\Gamma^{jX}_{ik}(r)|<\frac{1}{n}$ such
that $Y_{n}$ is not topologically conjugate to $X$. Since
\begin{equation}\frac{d
{Y_{n}}^{j}}{dt}(t)+{Y_{n}}^{k}(t){Y_{n}}^{i}(t)\Gamma^{jY_{n}}_{ik}(\alpha^{Y_{n}}
(t))=0,
\end{equation}
 and
 $\Gamma^{jY_{n}}_{ik}(\alpha^{Y_{n}}(t))\rightarrow
 \Gamma^{jX}_{ik}(\alpha^{X}(t))$ when $n\rightarrow \infty $, then
 the sequence $\{Y_{n}(t)\}$ tends to the solution of the system \begin{equation}\frac{d
 X^{j}}{dt}(t)+{X}^{k}(t){X}^{i}(t)\Gamma^{jX}_{ik}(\alpha^{X}
(t))=0.
\end{equation}
By the uniqueness theorem of O.D.E. \cite{A}  we have
$Y_{n}\rightarrow X$ when $n\rightarrow \infty.$ The structural
stability of $X$ implies that, there is $n_{0}$  such that if
$n>n_{0}$, then $Y_{n}$ is topologically conjugate to $X$, which
is a contradiction. Thus $X$ is a connection stable vector field
on
$M$.\\
To prove the converse we assume $X$ is a connection stable vector
field, and $X$ is not a structurally stable vector field. Since
the number of singularities of $X$ is finite, then there is a
sequence $\{Y_{n}(t)\}$ of smooth vector fields  with
$|S(Y_{n})|=|S(X)|$ such that $Y_{n}\rightarrow X$ when
$n\rightarrow \infty$, and $Y_{n}$ is not topologically conjugate
to $X$ for each $n$. Since
\begin{equation}\frac{d
{Y_{n}}^{j}}{dt}(t)+{Y_{n}}^{k}(t){Y_{n}}^{i}(t)\Gamma^{jY_{n}}_{ik}(\alpha^{Y_{n}}
(t))=0,
\end{equation}
and
\begin{equation}\frac{d
 X^{j}}{dt}(t)+{X}^{k}(t){X}^{i}(t)\Gamma^{jX}_{ik}(\alpha^{X}
(t))=0,
\end{equation}
then $\Gamma^{jY_{n}}_{ik}(\alpha^{Y_{n}} (t))\rightarrow
\Gamma^{j}_{ik}(\alpha^{X} (t))$. The compact property of $M$
implies that the former convergence is a uniform convergence. So
$\Gamma^{jY_{n}}_{ik} \rightarrow \Gamma^{jX}_{ik}$ uniformly. The
connection stability of $X$ implies that there is $n_{0}$ such
that for $n>n_{0}$, $Y_{n}$ is topologically conjugate to $X$,
which is a contradiction. Hence $X$ is structurally stable.
\end{proof}
\section{EXAMPLES}
We begin this section by presenting an example of a connection
stable vector field.
\begin{example}
Let $M$ be the open interval $(1, 50)$ as an open submanifold of
$\mathbb{R}$. The vector field $X(x)=2x$ is a connection stable
vector field on $M$. Because if we take $\epsilon=\frac{1}{100}$,
then for each vector field $Y(x)=f(x)$ with
$|\Gamma^{1X}_{11}(x)-\Gamma^{1Y}_{11}(x)|<\frac{1}{2}$ we have
$|\Gamma^{1X}_{11}(x)-\Gamma^{1Y}_{11}(x)|=|\frac{-1}{x}+\frac{f'(x)}{f(x)}|<\frac{1}{100}$.
Thus $\frac{1}{100}<\frac{f'(x)}{f(x)}$. Hence we have two
cases:\\
Case 1. $f(x)>0$ and $f'(x)>0$.\\
Case 2.  $f(x)<0$ and $f'(x)<0$.\\
In these two cases two vector fields are topologically conjugate.
\end{example}
Now we show that harmonic oscillator \cite{N} is not a connection
stable vector field.
\begin{example}
Let $M$ be the interior of the circle of radius $2$ with the
center $(0,0)$ in $\mathbb{R}^2$ as an open submanifold of
$\mathbb{R}^2$. We show that the vector field $X(x,y)=(y,-x)$ is
not a connection stable vector
field on $M$. The Christoffel symbols of this vector fields are:\\
$$\Gamma^{1X}_{11}(x,y)=\frac{x}{y^{2}},~
\Gamma^{1X}_{12}(x,y)=\Gamma^{1X}_{21}(x,y)=\frac{-1}{y}, ~
\Gamma^{1X}_{22}(x,y)=\frac{1}{x},$$
$$\Gamma^{2X}_{11}(x,y)=\frac{1}{y},~\Gamma^{2X}_{12}(x,y)=\Gamma^{2X}_{21}(x,y)=\frac{-1}{x},
~ \Gamma^{2X}_{22}(x,y)=\frac{y}{x^{2}}.$$ For $\delta
>0$ if we take $Y(x,y)=(y+\delta x, -x)$, then it's Christoffel symbols
are:
$$\Gamma^{1Y}_{11}(x,y)=\frac{x-\delta y-\delta^{2}x}{(y+\delta x)^{2}},$$ $$
\Gamma^{1Y}_{12}(x,y)=\Gamma^{1Y}_{21}(x,y)=\frac{x-\delta
y-\delta^{2}x}{-x(y+\delta x)},
$$
$$ \Gamma^{1Y}_{22}(x,y)=\frac{x-\delta y-\delta^{2}x}{x^{2}},~
\Gamma^{2Y}_{11}(x,y)=\frac{1}{y+\delta x},$$
$$\Gamma^{2Y}_{12}(x,y)=\Gamma^{2Y}_{21}(x,y)=\frac{-1}{x},$$
$$  \Gamma^{2Y}_{22}(x,y)=\frac{y+\delta x}{x^{2}}.$$
Now for a given $\epsilon >0$ we can take $\delta$ sufficiently
small so that
$$|\Gamma^{jY}_{ik}(x,y)-\Gamma^{jX}_{ik}(x,y)|<\epsilon.$$
But the vector fields $X$ and $Y$ are not topologically conjugate,
because $X$ has periodic orbits, but $Y$ has no any periodic
orbit.
\end{example}
   In the next example we present a technique for proving
   unstability of a vector field.
\begin{example}\label{Ex3}
The vector field $Y(x,y,z)=(y,-x, x)$ on $R^{3}$ is not a
connection stable vector field.\\
We have:
$$\Gamma^{1Y}_{11}(x,y,z)=\frac{x}{y^{2}},~\Gamma^{1Y}_{12}(x,y,z)=\frac{-1}{y},~\Gamma^{1Y}_{13}(x,y,z)=\frac{1}{y},$$
$$\Gamma^{1Y}_{22}(x,y,z)=\frac{1}{x},~\Gamma^{1Y}_{23}(x,y,z)=\frac{-1}{x},~\Gamma^{1Y}_{33}(x,y,z)=\frac{1}{x},$$
$$\Gamma^{2Y}_{11}(x,y,z)=-\Gamma^{3Y}_{11}(x,y,z)=\frac{1}{y},~\Gamma^{2Y}_{12}(x,y,z)=-\Gamma^{3Y}_{12}(x,y,z)=\frac{-1}{x},$$
$$\Gamma^{2Y}_{13}(x,y,z)=-\Gamma^{3Y}_{13}(x,y,z)=\frac{1}{x},~\Gamma^{2Y}_{22}(x,y,z)=-\Gamma^{3Y}_{22}(x,y,z)=\frac{y}{x^{2}},$$
$$\Gamma^{2Y}_{23}(x,y,z)=-\Gamma^{3Y}_{23}(x,y,z)=\frac{y}{-x^{2}},~\Gamma^{2Y}_{33}(x,y,z)=-\Gamma^{3Y}_{33}(x,y,z)=\frac{y}{x^{2}}.$$
 If we take $Z(x,y,z)=(x,y,x)$, then we have:
$$\Gamma^{1Z}_{11}(x,y,z)=\Gamma^{3Z}_{11}(x,y,z)=\frac{1}{x},~\Gamma^{1Z}_{12}(x,y,z)=\Gamma^{3Z}_{12}(x,y,z)=\frac{-1}{y},$$
$$\Gamma^{1Z}_{13}(x,y,z)=\Gamma^{3Z}_{13}(x,y,z)=\frac{-1}{x},~\Gamma^{1Z}_{33}(x,y,z)=\Gamma^{3Z}_{33}(x,y,z)=\frac{-1}{x},$$
$$\Gamma^{2Z}_{11}(x,y,z)=\frac{-y}{x^{2}},~\Gamma^{2Z}_{12}(x,y,z)=\frac{-1}{x},~\Gamma^{2Z}_{13}(x,y,z)=\Gamma^{2Z}_{33}(x,y,z)=\frac{-y}{x^{2}}.$$
For a given $\epsilon >0$ there is $B>0$ such that
$|\Gamma^{jY}_{ik}(x,y,z)|<\frac{\epsilon}{4},$ and
$|\Gamma^{jZ}_{ik}(x,y,z)|<\frac{\epsilon}{4},$ for $2B>|x|>B$ and
$2B>|y|>B$. If  $Y$ is connection stable, then  it must be
connection stable on $D=\{(x,y,z)\in R^{3}~:~ 2B>|x|>B,~and ~
2B>|y|>B\}$. We see that on $D$, $ |\Gamma^{iY}_{jk}-
\Gamma^{iZ}_{jk}|<\epsilon $, but these two vector fields are not
topologically conjugate, because $Z$ has no any periodic orbit but
$Y$ has many periodic orbits of the form
$O(a,b,-b)=\{(acos(t)+bsin(t), -asin(t)+bcos(t),
asin(t)-bcos(t))~:~ t\in  R\},$ when
$\sqrt{2}B<\sqrt{a^{2}+b^{2}}<2\sqrt{2}B.$
\end{example}
\begin{theorem}\label{Th2}
Let $X$ be a vector field on $R^{3}$ such that for a given
$\epsilon
>0$ there is $A>0$ so that for each $B>A$ we have $|\Gamma^{jX}_{ik}(x,y,z)|<\frac{\epsilon}{4},$ for $2B>|x|>B$ and
$2B>|y|>B$. Then $X$ is not a connection stable vector field.
\end{theorem}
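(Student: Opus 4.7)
The plan is to generalize the technique of Example \ref{Ex3} by exhibiting, for every $\epsilon>0$, two explicit comparison vector fields on $R^3$ which (i) both have Christoffel symbols small in modulus on a far-from-origin region, and (ii) are not topologically conjugate to each other. Since $X$ can be conjugate to at most one of them, it must fail to be conjugate to a perturbation $\epsilon$-close to it in Christoffel symbols, so it is not connection stable.

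First, I would fix $\epsilon>0$ and invoke the hypothesis to obtain $A>0$ such that for every $B>A$, $|\Gamma^{jX}_{ik}(x,y,z)|<\epsilon/4$ throughout $D_B=\{(x,y,z)\in R^3:B<|x|<2B,\;B<|y|<2B\}$. Next I would deploy the two vector fields already analyzed in Example \ref{Ex3}: $Y_1(x,y,z)=(y,-x,x)$, whose Christoffel symbols are ratios of monomials of the form $x/y^2$, $1/y$, $1/x$, $y/x^2$; and $Y_2(x,y,z)=(x,y,x)$, whose Christoffel symbols are of the same shape. A direct inspection of those formulas shows that on $D_B$ every $|\Gamma^{jY_\ell}_{ik}|$ is bounded by $2/B$, so choosing $B>\max(A,8/\epsilon)$ gives $|\Gamma^{jY_\ell}_{ik}|<\epsilon/4$ on $D_B$ for $\ell=1,2$. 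Combining with the bound on $X$ via the triangle inequality yields, for every $p\in D_B$,
\[
|\Gamma^{jY_\ell}_{ik}(p)-\Gamma^{jX}_{ik}(p)|<\epsilon/2<\epsilon,\qquad \ell=1,2.
\]

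The final step exploits the topological dichotomy: as computed in Example \ref{Ex3}, $Y_1$ admits periodic orbits $O(a,b,-b)$ with $\sqrt{2}B<\sqrt{a^2+b^2}<2\sqrt{2}B$ lying entirely in $D_B$, whereas $Y_2$ has no periodic orbits because its first coordinate satisfies $\dot{x}=x$ and grows exponentially. Hence $Y_1$ and $Y_2$ are not topologically conjugate, and $X$ can be topologically conjugate to at most one of them. The remaining $Y_\ell$ is an $\epsilon$-close connection perturbation of $X$ not conjugate to it, contradicting connection stability.

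The step I expect to require the most care is the same implicit passage that the author uses in Example \ref{Ex3}, namely the appeal that connection stability of $X$ on $R^3$ restricts to connection stability on the open submanifold $D_B$; without this inheritance property one would have to extend $Y_\ell|_{D_B}$ to a globally defined vector field on $R^3$ that is still within $\epsilon$ in Christoffel symbols of $X$ everywhere, e.g.\ by smooth cutoff. I would follow the same convention as in Example \ref{Ex3} and treat the localization as granted, since the rest of the argument is then a straightforward triangle-inequality estimate combined with the periodic/aperiodic orbit dichotomy.
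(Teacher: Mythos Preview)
Your proposal reproduces the paper's proof: the same comparison fields $Y$ and $Z$ from Example~\ref{Ex3}, the same region $D_B$, the same triangle-inequality bound forcing $|\Gamma^{jY}_{ik}-\Gamma^{jX}_{ik}|<\epsilon$ and $|\Gamma^{jZ}_{ik}-\Gamma^{jX}_{ik}|<\epsilon$ there, and the same contradiction via transitivity of topological conjugacy together with the periodic/aperiodic dichotomy between $Y$ and $Z$. The one extra assertion you add---that the periodic orbits $O(a,b,-b)$ lie entirely in $D_B$---is actually false (each coordinate $a\cos t+b\sin t$ passes through zero), but the paper does not make this claim either; it simply invokes the existence of periodic orbits for $Y$ versus none for $Z$, and you can safely drop that clause to match the paper's argument exactly.
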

\begin{proof}
We take $Y$ and $Z$ as two vector fields which are presented in
Example \ref{Ex3}. For a given $\epsilon >0$ we take $B>A$ large
enough so that $|\Gamma^{jY}_{ik}(x,y,z)|<\frac{\epsilon}{4},$
$|\Gamma^{jZ}_{ik}(x,y,z)|<\frac{\epsilon}{4},$ and
$|\Gamma^{jX}_{ik}(x,y,z)|<\frac{\epsilon}{4},$ for $2B>|x|>B$ and
$2B>|y|>B$. If $X$ be a connection stable vector field, then it
must be connection stable on $D=\{(x,y,z)\in R^{3}~:~
2B>|x|>B,~and ~ 2B>|y|>B\}$. So it is topologically conjugate to
$Y$ and $Z$ on $D$. Thus $Y$ and $Z$ are topologically conjugate
on $D$, which is a contradiction. Thus $X$ is not a connection
stable vector field.
\end{proof}
\section{Einstein-Vlasov System and galaxies rotation curves}
In this section we consider the stationary solutions of the galaxy
model which is presented in \cite{CO}. The problem of the galaxies
rotation curves is a portion of the more general Dark Matter
problem. It consists in the strong discrepancy between the
observed experimental curves and the curve derived by applying the
theory of Newtonian gravity to the matter observed in the galaxies
\cite{Rubin}. The most popular way of solving the problem is to
postulate the existence of a hypothetical component of matter
which, unlike known matter, would not emit electromagnetic
radiation and would currently only be detectable indirectly
through its gravitational effects \cite{Salucci}. An alternative
approach is to extend general relativity and, correspondingly, its
Newtonian approximation, in the well-known framework of extended
theories of gravity \cite{Corda}. In \cite{CO} it has been shown
that, in the framework of $R^2$ gravity  and in the linearized
approach, one can obtain spherically symmetric stationary states
which can be used as a model for galaxies. This approach could be,
in principle, a solution to the problem of the galaxies rotation
curves. In this framework, which is founded on the Einstein-Vlasov
system, the Ricci curvature  generates a high energy term that, in
principle, can be identiﬁed as the dark matter ﬁeld making up
the galaxy. In particular, the Authors of \cite{CO} discussed the
solutions of a Klein-Gordon equation for the spacetime curvature.
Such solutions describe high energy scalarons, which are
gravitational ﬁelds that in the context of galactic dynamics can
be interpreted like the no-light-emitting galactic component. The
solutions are \cite{CO}:
\begin{equation}\label{Eq3}\frac{1}{r^{2}}\frac{d}{dr}(\frac{d}{dr}b\widetilde{R}(t,z)r^{2})=(1+2b\widetilde{R}(t,z))\mu(r)\end{equation}
\begin{equation}\mu(r)=\int{\frac{dp}{\sqrt{1+||p||^{2}}}f(X,p)}\end{equation}
\begin{equation}p.\partial_{X}f(X,p)-\frac{1}{r}\frac{d}{dr}b\widetilde{R}(t,z)[(p.X)p+X].\partial_{p}f(X,p)=0,\end{equation}
where $X=(x,y,z)$, $p=(p_{1},p_{2},p_{3})$, and $r=||X||$.\\
If we take $v=b\widetilde{R}r^{2}$, and $u=\frac{dv}{dr}$ then we
deduce the following system from Equation \ref{Eq3}.\\
\begin{equation}\label{Eq4}
          \left\{
         \begin{array}{rl}
         \frac{dv}{dr}=u \hspace{1.85cm}\\
        \frac{du}{dr}= (r^{2}+2v)\mu(r)
         \end{array} \right..
         \end{equation}
So we have $\frac{d^{2}v}{dr^{2}}=\frac{du}{dr}= (r^{2}+2v)\mu(r)$
and
$$\frac{d^{2}u}{dr^{2}}=(2r+2\frac{dv}{dr})\mu(r)+(r^{2}+2v)\mu'(r)$$
$$=(2r+2u)\mu(r)+(r^{2}+2v)\mu'(r).$$ The Christoffel symbols of
the system \ref{Eq4} are:\\
$$\Gamma^{1}_{11}=\frac{-(r^{2}+2v)\mu(r)}{u^{2}},~
\Gamma^{1}_{12}=\Gamma^{1}_{21}=\frac{-1}{u},$$
$$\Gamma^{1}_{22}=\frac{-1}{(r^{2}+2v)\mu(r)},~
\Gamma^{2}_{11}=\frac{-(2r+2u)\mu(r)-(r^{2}+2v)\mu'(r)}{u^{2}},$$
$$\Gamma^{2}_{12}=\Gamma^{2}_{21}=\frac{-(2r+2u)}{u(r^{2}+2v)}-\frac{\mu'(r)}{u\mu(r)},$$
$$\Gamma^{2}_{22}=\frac{-(2r+2u)}{(r^{2}+2v)^{2}\mu(r)}-\frac{\mu'(r)}{(r^{2}+2v)\mu^{2}(r)}.$$
Now we show that the non-autonomous system \ref{Eq4} is not a
connection stable system. In fact for a given $\epsilon >0$ by
choosing $r_{0}$ large enough and restricting ourself to the
region $r_{0}<r<2r_{0}$ we can rich to the condition
$|\Gamma^{j}_{ik}(t,r)|<\frac{\epsilon}{4}.$ In this region we
have a harmonic oscillator with the Christoffel symbols arbitrary
close to the Christoffel symbols of system \ref{Eq4}. Since
harmonic oscillator is not connection stable, then system
\ref{Eq4} is not
a connection stable system.\\
\section{Conclusion}
In this work we extend the notion of stability of vector fields on
compact manifolds to the vector fields on noncompact manifold. In
section $4$ we extend the notion of connection stability  of
vector fields or autonomous systems to non-autonomous systems. In
fact Definition \ref{def1} is extendable for non-autonomous
systems. As a concrete physical case we considered the
Einstein-Vlasov system analyzed in \cite{CO}, where, in the
framework of $R^2$ gravity and in the linearized approach, it is
possible to obtain spherically symmetric stationary states which
are used as a model for galaxies. The model, which is founded on
the Einstein-Vlasov system could,  in principle, solve the Dark
Matter Problem. In fact, the Ricci curvature generates a high
energy field, the so-called "scalaron", which, can be identified
as the dark matter field constructing the galaxy. In particular,
the solutions of a  Klein-Gordon equation for the spacetime
curvature are discussed in \cite{CO}. Such solutions represent
high energy scalarons, which are particular gravitational fields,
that in the context of galactic  dynamics can be interpreted like
the no-light-emitting galactic component. In that case,
non-autonomous system which has been obtained did not result a
connection stable system. The consideration of connection
stability for
non-autonomous systems can be a topic for further research.\\
 \section{Data  availability:}  
No Data associated in the manuscript.

\end{document}